\title{\LARGE \bf
	Cooperative Path Integral Control for Stochastic Multi-Agent Systems
}
\author{Neng Wan$^{1}$, Aditya Gahlawat$^{1}$, Naira Hovakimyan$^{1}$, Evangelos A. Theodorou$^{2}$, and Petros G. Voulgaris$^{3}$
	\thanks{*This work is supported in part by AFOSR under Grant FA9550-15-1-0518, NSF under Grant CCF-1717154, CMMI-1663460, CNS-1932288, CNS-1932529, ECCS-1739732, NRI ECCS-1830639, and ZJU-UIUC Institute Research Program. }
	\thanks{$^{1}$Neng Wan, Aditya Gahlawat, and Naira Hovakimyan are with the Department of Mechanical Science and Engineering, University of Illinois at Urbana-Champaign, Urbana, IL 61801, USA. 
		{\tt\footnotesize \{nengwan2, gahlawat, nhovakim\}@illinois.edu}}%
	\thanks{$^{2}$Evangelos A. Theodorou is with the Department of Aerospace Engineering, Georgia Institute of Technology, Atlanta, GA 30332, USA.
		{\tt\footnotesize evangelos.theodorou@gatech.edu}}
	\thanks{$^{3}$Petros G. Voulgaris is with the Department of Mechanical Engineering, University of Nevada, Reno, NV 89557, USA.
		{\tt\footnotesize pvoulgaris@unr.edu}}%
}
\newtheorem{theorem}{Theorem}
\newtheorem{proposition}[theorem]{Proposition}
\renewenvironment{proof}{{\textit{Proof.}}}
\begin{document}
	\maketitle
	\thispagestyle{empty}
	\pagestyle{empty}

	\begin{abstract}
		A distributed stochastic optimal control solution is presented for cooperative multi-agent systems. The network of agents is partitioned into multiple factorial subsystems, each of which consists of a central agent and neighboring agents. Local control actions that rely only on agents' local observations are designed to optimize the joint cost functions of subsystems. When solving for the local control actions, the joint optimality equation for each subsystem is cast as a linear partial differential equation and solved using the Feynman-Kac formula. The solution and the optimal control action are then formulated as path integrals and approximated by a Monte-Carlo method. Numerical verification is provided through a simulation example consisting of a team of cooperative UAVs.
	\end{abstract}

	\section{Introduction}
	
	The research on control and planning in multi-agent systems (MASs) has been developing rapidly during the last decade with increasing demand from areas, such as cooperative vehicles~\cite{Cichella_CSM_2016}, Internet of Things~\cite{Ota_TSG_2012} and intelligent infrastructures~\cite{Dorfler_TCNS_2016}. Distinct from other control problems, control of MASs is characterized by challenges of limited information and resources of local agents, randomness of agent dynamics and communication networks, and optimality and robustness of joint performance. A good summary of recent progress in multi-agent control can be found in~\cite{Amato_CDC_2013, Cao_TII_2013, Frank_2013, Oh_Auto_2015, Qin_TIE_2017}. Building upon these results, this paper proposes a cooperative optimal control scheme by extending the path integral control (PIC) algorithm in~\cite{Theodorou_JMLR_2010} for a general type of stochastic MAS subject to limited feedback information and computational resource.
	
	Path integral control is a model-based stochastic optimal control (SOC) algorithm that linearizes and solves the stochastic Hamilton-Jacobi-Bellman (HJB) equation with the facilitation of Cole-Hopf transformation, \textit{i.e.} exponential transformation of value function~\cite{Fleming_AMO_1977, Todorov_NIPS_2007}. Compared with other continuous-time SOC techniques, since PIC formulates the optimality equations in linear form, it enjoys the superiority of closed-form solution~\cite{Pan_NIPS_2015} and superposition principle~\cite{Todorov_NIPS_2009}, which makes PIC a popular control scheme for robotics~\cite{Williams_TRO_2018}. Some recent developments of single-agent PIC algorithms can be found in~\cite{Theodorou_JMLR_2010, Gomez_KDD_2014, Pan_NIPS_2015, Williams_JGCD_2017}.
	
	Different from many prevailing distributed control algorithms~\cite{Cao_TII_2013, Frank_2013, Oh_Auto_2015, Qin_TIE_2017}, such as consensus and synchronization that usually assume a given behavior, multi-agent SOC allows agents to have different objectives and optimizes the action choices for more general scenarios~\cite{Amato_CDC_2013}. Nevertheless, it is not straightforward to extend the single-agent SOC or PIC algorithms to MASs. The exponential growth of dimensionality in MASs and the consequent surges in computation and data storage demand more sophisticated and preferably distributed planning and execution algorithms. The involvement of communication networks (and constraints) requires the multi-agent SOC algorithms to achieve stability and optimality subject to local observation and more involved cost function. While a few efforts have been made on multi-agent PIC, most of these control algorithms still depend on the knowledge of the global state, \textit{i.e.} a fully connected communication network, which may not be feasible or affordable to attain in practice. Some multi-agent PIC algorithms also assume that the joint cost function can be factorized over agents, which simplifies the multi-agent control problem into multiple single-agent problems by ignoring the correlations among agents, and some features and advantages of MASs are therefore forfeited. Broek \textit{et al.} investigated the multi-agent PIC problem for continuous-time systems governed by It\^o diffusion process~\cite{Broek_2008_JAIR}; a path integral formula was put forward to approximate the optimal control actions, and a graphical model inference approach was adopted to predict the optimal path distribution; nonetheless, the optimal control policy assumed an accurate and complete knowledge of global state, and the inference was conducted on the basis of mean field approximation, which assumes that the cost function can be disjointly factorized over agents. A distributed PIC algorithm with infinite-horizon and discounted cost was applied to solving a distance-based formation problem for nonholonomic vehicular network without explicit communication topology in~\cite{Anderson_Robotica_2014}. Cooperative PIC problem was also recently studied in~\cite{Williams_JGCD_2017} as an accessory result for a novel single-agent PIC algorithm; an augmented dynamics was built by piling up the dynamics of all agents, and a single-agent PIC algorithm was then applied to the augmented system. Nonetheless, the results resorting to augmented dynamics presume  fully connected network and face the challenge that the computational and sampling schemes that originated from single-agent problem may become inefficient and possibly fail as the dimensions of augmented state and control grow exponentially in the number of agents.
	
	In order to address the aforementioned issues, cooperative PIC algorithm is investigated in this paper with consideration of local observation, joint cost function, and an efficient computational method. A distributed control framework that partitions the connected communication network into multiple factorial subsystems is proposed, and the local PIC protocol of each individual agent is computed in a subsystem, which consists of an interested central agent and its neighboring agents. Under this framework, every (central) agent relying on the local observation acts optimally to minimize a joint cost function of its subsystem, and the complexities of computation and sampling are now related to the size (amount of agents) of each factorial subsystem instead of the entire network. When solving for the local optimal control action, instead of adopting the mean-field approximation and factorizing the cost function over individual agents, joint cost functions are considered inside the subsystems. The joint optimality equation of each subsystem is first cast into a joint stochastic HJB equation and formulated as a linear partial differential equation (PDE) that can be solved by the Feynman-Kac lemma. The solution of optimality equation and joint optimal control action are then formulated as path integrals and approximated by Monte-Carlo (MC) method. Parallel random sampling technique is introduced to accelerate and parallelize the approximation of the PIC solutions, and state measurements and sampled trajectory data are exchanged between neighboring agents. Illustrative examples of a cooperative UAV team are presented to verify the effectiveness and advantages of cooperative PIC algorithm.
	
	This paper is organized as follows: \hyperref[sec2]{Section II} formulates the control problem; \hyperref[sec3]{Section III} investigates the cooperative PIC algorithm; \hyperref[sec4]{Section~IV} presents the simulation example, and \hyperref[sec5]{Section V} draws the conclusions. For a matrix $X$ and a vector $v$, $|X|$ denotes the determinant of $X$, and $\|v\|^2_M = v^\top M v$. For a set $S$, $|S|$ denotes its cardinality.

	\section{Problem Formulation}\label{sec2}
	The mathematical representation for MAS, the cooperative control framework, including the stochastic dynamics and optimal control formulation are  introduced in this section. 
	
	\subsection{Multi-Agent System and Factorial Subsystems}
	For a MAS with $N$ homogeneous agents indexed by $\mathcal{D} = \{1,2,\allowbreak\cdots,  N\}$, we use a connected graph $\mathcal{G} = \{ \mathcal{V}, \mathcal{E} \}$ to represent the bilateral communication network underlying this MAS, where vertex $v_i \in \mathcal{V}$ denotes agent $i$ and undirected edge $(v_i, v_j) \in \mathcal{E}$, if agent $i$ and $j$ can directly exchange their information with each other. For agent $i$, $\mathcal{N}_i$ is the index set of all agents neighboring  agent $i$. The factorial subsystem $\bar{\mathcal{N}}_i = \mathcal{N}_i \cup \{i\}$ comprises a central agent $i$ and its neighboring agents $\mathcal{N}_i$. \hyperref[fig1]{Figure~1} shows an example of a MAS and two of its factorial subsystems.
	
	Our cooperative control framework is a trade-off scheme between the cooperation among agents and computational complexity. Instead of optimizing a fully factorized cost function based on the mean-field assumption~\cite{Broek_2008_JAIR} or a global cost function relying on the knowledge of global states~\cite{Williams_JGCD_2017}, we design the local control action $u_i$, which depends on the local observation of agent $i \in \mathcal{D}$, by minimizing the joint cost function of subsystem $\mathcal{\bar N}_i$. Under this cooperative control scheme, we not only capture and optimize the correlation between neighboring agents, but circumvent the dependency on the global state as well as the exponential growth of the global state dimension.	
	\begin{figure}[htpb!]\label{fig1}\vspace{-0.5em}
		\centerline{\includegraphics[width=0.75\columnwidth]{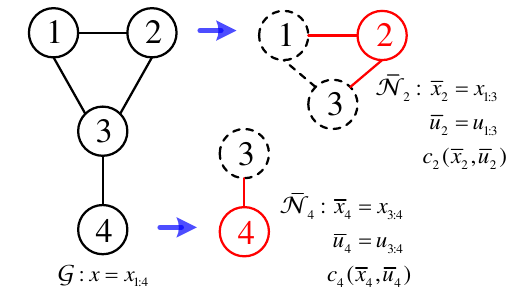}}
		\caption{MAS $\mathcal{G}$ and some of its factorial subsystems.}
	\end{figure}\vspace{-1em}
	
	\subsection{Stochastic Optimal Control}
	Consider a network of agents with homogeneous and mutually independent passive dynamics. We use the following It\^{o} diffusion process to describe the joint dynamics of subsystem $\bar{\mathcal{N}}_i$
		\begin{equation}\label{dynamics}
		d \bar{x}_i = \bar{f}_i(\bar{x}_i,t) dt + \bar{B}_i(\bar{x}_i)  \left[ \bar{u}_i(\bar{x}_i,t) dt + \bar{\sigma}_i d\bar{w}_i  \right],
		\end{equation}
		where $\bar{x}_i = [x_i^\top, x_{j\in\mathcal{N}_i}^\top] \in \mathbb{R}^{M\cdot |\bar{\mathcal{N}}_i|}$ is the joint states of subsystem $\mathcal{\bar{N}}_i$ with $M$ denoting the state dimension of a single agent; $\bar{f}_i(\bar{x}_i,t) = [f_i(x_i, t)^\top, f_{j\in\mathcal{N}_i}(x_j, t)^\top] \in \mathbb{R}^{M\cdot |\bar{\mathcal{N}}_i|}$, $\bar{B}_i(\bar{x}_i) = \textrm{diag}\{B_i(x_i), B_{j\in \mathcal{N}_i}(x_j) \} \in \mathbb{R}^{M\cdot |\bar{\mathcal{N}}_i| \times P \cdot |\bar{\mathcal{N}}_i|}$, and $\bar{u}_i(\bar{x}_i,t)  =  [u_i(\bar{x}_i, t)^\top, u_{j\in\mathcal{N}_i}(\bar{x}_i, t)^\top]^\top  \in \mathbb{R}^{P \cdot |\bar{\mathcal{N}}_i|}$ are the joint passive dynamics, joint control matrix, and joint control input of subsystem $\mathcal{\bar{N}}_i$ with $P$ being the input dimension of a single agent; joint noise $d\bar{w}_i\in \mathbb{R}^{P \cdot |\bar{\mathcal{N}}_i|}$ is a vector of Brownian components with zero mean and unit rate of variance, and $\bar{\sigma}_i  \in \break \mathbb{R}^{P \cdot |\bar{\mathcal{N}}_i| \times P \cdot |\bar{\mathcal{N}}_i|}$ is  a positive semi-definite matrix that denotes the joint covariance of noise $d\bar{w}_i$. By rearranging the components in $\bar{x}_i$, the stochastic dynamics~\eqref{dynamics} can be partitioned into a non-directly actuated part with states $\bar{x}_{i(n)} \in \mathbb{R}^{U\cdot |\mathcal{\bar{N}}_i|}$ and a directly actuated part with states $\bar{x}_{i(d)} \in \mathbb{R}^{D\cdot |\mathcal{\bar{N}}_i|}$, where $U$ and $D$ respectively denote the dimensions of non-directly actuated states and directly actuated states for a single agent. Consequently, the joint passive dynamics and control matrix can respectively be partitioned as $\bar{f}_i(\bar{x}_i, t) = [\bar{f}_{i(n)}(\bar{x}_i, t)^\top, \bar{f}_{i(d)}(\bar{x}_i, t)^\top]^\top$ and $\bar{B}_i(\bar{x}_i) = [0, \bar{B}_{i(d)}(\bar{x}_i)]^\top$.

		Let $\mathcal{\bar I}_i$ denote the set of joint interior states in subsystem $\bar{\mathcal{N}}_i$. When $\bar{x}_i \in \mathcal{\bar I}_i$, the joint running cost function of $\bar{\mathcal{N}}_i$ is defined as
		\begin{equation}\label{Run_Cost}
		c_i(\bar{x}_i, \bar{u}_i) = q_i(\bar{x}_i) + \frac{1}{2}\bar{u}_i(\bar{x}_i, t)^\top \bar{R}_i \bar{u}_i(\bar{x}_i, t),
		\end{equation}
		where $q_i(\bar{x}_i) \geq 0$ is a state-related cost, and $\bar{u}_i(\bar{x}_i, t)^\top \cdot \bar{R}_i  \bar{u}_i(\bar{x}_i, t)$ is a control-quadratic term with $\bar{R}_i = \textrm{diag}\{R_i, \break R_{j\in \mathcal{N}_i}\} \in \mathbb{R}^{P \cdot |\bar{\mathcal{N}}_i| \times P \cdot |\bar{\mathcal{N}}_i|}$ being positive definite. Let $\mathcal{\bar B}_i$ denote the set of joint exit states in subsystem $\bar{\mathcal{N}}_i$. When $\bar{x}_i^{t_f} \in \mathcal{\bar B}_i$, the terminal cost function is defined as $\phi_i(\bar{x}_i^{t_f})$, where $t_f$ indicates the exit time. A cost-to-go function $J^{\bar{u}_i}_i(\bar{x}_i^t, t) $ for first-exit problem subject to control policy $\bar{u}_i$ can then be defined as $J^{\bar{u}_i}_i(\bar{x}_i^t, t) = \mathbb{E}^{\bar{u}_i}_{\bar{x}_i^t, t} [ \phi_i(\bar{x}_{i}^{t_f}) + \int_{t}^{t_f} c_i(\bar{x}_i(\tau), \bar{u}_i(\tau)) \ d\tau]$. We can minimize $J^{\bar{u}_i}_i(\bar{x}_i^t, t)$ by solving for the joint optimal control action $\bar{u}_i^*$ from the following joint optimality equation
		\begin{equation}\label{optimal_eq}
		V_i(\bar{x}_i, t) = \min_{\bar{u}_i} \mathbb{E}^{\bar{u}_i}_{\bar{x}_i^t, t} \left[ \phi_i(\bar{x}_{i}^{t_f}) + \int_{t}^{t_f} c_i(\bar{x}_i(\tau), \bar{u}_i(\tau)) \ d\tau  \right],
		\end{equation}
		where the value function $V_i(\bar{x}_i, t)$ is the expected cumulative running cost for starting at $\bar{x}_i$ and acting optimally thereafter. By following the local optimal control action $u_i^*$ marginalized from $\bar{u}_i^*$, each (central) agent acts optimally to minimize $J^{\bar{u}_i}_i(\bar{x}_i^t, t)$, while the (global) optimality condition in~\eqref{optimal_eq} can only be attained when $\mathcal{G}$ is fully connected, since the local optimal control action $u_j^*$ of (neighboring) agent $j\in\mathcal{N}_i$ usually does not accord with $\bar{u}^*_i$. This conflict widely exists in distributed optimal control and optimization problems when the networks are subject to local/partial observation and limited communication, and some serious and heuristic studies on the global- and sub-optimality of distributed systems can be found in \cite{Nedic_TAC_2009, Frank_2013, Johari_MOR_2004, Voulgaris_CDC_2017} and references therein. We will not dive into this technical detail, as the objective of this paper is to propose a sub-optimal PIC scheme with sufficient computation and sample efficiency in networked MAS.

	\section{Cooperative Path Integral Control}\label{sec3}
	The joint optimality equation~\eqref{optimal_eq} is first cast as a linear PDE that can be solved by the Feynman-Kac formula. The solution and the joint optimal control action $\bar{u}_i^*$ are then formulated as path integrals that can be approximated by distributed stochastic sampling. 
	
	\subsection{Linear Formulation}
	We first formulate the joint optimality equation~\eqref{optimal_eq} as a joint HJB equation and cast it as a linear PDE with the following exponential transformation, which is also known as the Cole-Hopf transformation~\cite{Fleming_AMO_1977}:
	\begin{equation}\label{CH_Trans}
	Z(\bar{x}_i, t) = \exp[-V_i(\bar{x}_i, t) / \lambda_i],
	\end{equation}
	where $\lambda_i\in\mathbb{R}$ is a scalar, and $Z(\bar{x}_i, t)$ is the desirability function of joint state $\bar{x}_i$ at time $t$. The following theorem states the joint HJB equation, the joint optimal control action $\bar{u}_i^*$, and a linear formulation for~\eqref{optimal_eq} along with a closed-form solution.  
	\begin{theorem}\label{thm1}
		For the factorial subsystem $\mathcal{\bar{N}}_i$ subject to joint dynamics~\eqref{dynamics} and running cost function~\eqref{Run_Cost}, the joint optimality equation~\eqref{optimal_eq} is equivalent to the following joint stochastic HJB equation 
		\begin{align}\label{sto_HJB}
		&\hspace{-8pt}-\partial_t V_i(\bar{x}_i, t) = \min_{\bar{u}_i}  \mathbb{E}_{\bar{x}_i,t}^{\bar{u}_i} \bigg[ \frac{1}{2} \bar{u}_i(\bar{x}_i, t)^\top\bar{R}_i\bar{u}_i(\bar{x}_i, t) + q_i(\bar{x}_i, t)  \allowdisplaybreaks \nonumber \\
		& \hspace{10pt} + \sum_{j \in \bar{\mathcal{N}}_i} [f_j(x_j,t)  + B_j(x_j) u_j(\bar{x}_i, t)]^\top  \nabla_{x_j} V_i(\bar{x}_i,t)    \allowdisplaybreaks   \nonumber \\
		& \hspace{10pt} + \frac{1}{2} \sum_{j \in \bar{\mathcal{N}}_i}  \textrm{tr} \left( B_j(x_j)\sigma_j \sigma_j^\top B_j(x_j)^\top  \nabla_{x_jx_j}V_i(\bar{x}_i, t) \right) \allowdisplaybreaks \bigg]
		\end{align}
		with boundary condition $V_i(\bar{x}_i, t_f) = \phi_i(\bar{x}_i)$. The minimum of~\eqref{sto_HJB} is attained by the joint optimal control action
		\begin{equation}\label{OptimalAction}
		\bar{u}^*_i(\bar{x}_i, t) = -\bar{R}_i^{-1}\bar{B}_i(\bar{x}_i)^\top \nabla_{\bar{x}_i}V_i(\bar{x}_i, t).
		\end{equation}
		With transformation~\eqref{CH_Trans}, control action~\eqref{OptimalAction} and condition $\bar{R}_i = (\bar{\sigma}_i\bar{\sigma}_i^\top / \allowbreak \lambda_i)^{-1}$, the joint stochastic HJB equation~\eqref{sto_HJB} can be formulated as 
		\begin{align}\label{Linear_PDE}
		&\partial_{t} Z_i(\bar{x}_i, t) = \bigg[\frac{q_i(\bar{x}_i, t)}{\lambda_i} - \sum_{j\in\bar{\mathcal{N}}_i} f_j(x_j, t)^\top  \nabla_{x_j}  \\
		& \hspace{30pt}- \frac{1}{2}\sum_{j \in \bar{\mathcal{N}}_i} \textrm{tr}\left( B_j(x_j) \sigma_j \sigma_j^\top B_j(x_j)^\top  \nabla_{x_jx_j} \right)  \bigg]Z_i(\bar{x}_i,t) \nonumber
		\end{align}
		with boundary condition $Z_i(\bar{x}_i, t_f) = \exp[- \phi_i(\bar{x}_i) / \lambda_i]$ and has a closed-form solution
		\begin{equation}\label{Z_Solution}
		Z_i(\bar{x}_i, t) = \mathbb{E}_{\bar{x}_i,t}\left[ \exp\left( -\frac{\phi_i(\bar{y}^{t_f}_i)}{\lambda_i}  - \int_{t}^{t_f}   \frac{q_i(\bar{y}_i, \tau)}{\lambda_i} \ d\tau \right) \right],
		\end{equation}
		where the diffusion process $\bar{y}(t)$ is subject to $d \bar{y}_i(\tau) = \bar{f}_i(\bar{y}_i,\tau) d\tau + \bar{B}_i(\bar{y}_i)   \bar{\sigma}_i  \cdot d\bar{w}_i(\tau)$ initiated at $\bar{y}_i(t) = \bar{x}_i(t)$.
	\end{theorem}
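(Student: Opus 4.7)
The plan is to proceed in three stages matching the three claims of the theorem: derive the joint HJB (5) from (3), minimize pointwise in $\bar{u}_i$ to obtain (6), and then apply the Cole--Hopf transformation (4) to linearize and obtain (7) with Feynman--Kac solution (8).

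For the first stage I would invoke Bellman's principle on (3) over an infinitesimal interval $[t,t+dt]$, giving $V_i(\bar{x}_i,t) = \min_{\bar{u}_i} \mathbb{E}[c_i(\bar{x}_i,\bar{u}_i)\,dt + V_i(\bar{x}_i+d\bar{x}_i,t+dt)]$. Expanding the second term by the multivariate It\^o formula using the joint dynamics (1), the martingale increment vanishes in expectation, and letting $dt\to 0$ yields (5). The block-diagonal structure of $\bar{B}_i = \mathrm{diag}\{B_j\}_{j\in\bar{\mathcal{N}}_i}$ and the stacked form of $\bar{f}_i$ turn the drift and diffusion terms into the sums over $j\in\bar{\mathcal{N}}_i$ that appear in (5). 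For the second stage, the bracketed integrand in (5) is strictly convex quadratic in $\bar{u}_i$ because $\bar{R}_i\succ 0$; rewriting the cross term as $\bar{u}_i^\top \bar{B}_i^\top \nabla_{\bar{x}_i} V_i$ (again using the block-diagonality of $\bar{B}_i$) and setting the gradient with respect to $\bar{u}_i$ to zero immediately gives (6).

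For the third stage I would substitute (6) back into (5) to obtain a nonlinear PDE for $V_i$ containing the gradient-squared term $-\tfrac{1}{2}\nabla_{\bar{x}_i} V_i^\top\, \bar{B}_i \bar{R}_i^{-1}\bar{B}_i^\top\, \nabla_{\bar{x}_i} V_i$. Then I apply $V_i = -\lambda_i \log Z_i$: the chain rule gives $\partial_t V_i = -\lambda_i\, \partial_t Z_i / Z_i$, $\nabla V_i = -\lambda_i\, \nabla Z_i / Z_i$, and $\nabla^2 V_i = -\lambda_i\bigl(\nabla^2 Z_i/Z_i - \nabla Z_i\,\nabla Z_i^\top/Z_i^{2}\bigr)$. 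Inserting these and multiplying through by $-Z_i/\lambda_i$, the quadratic gradient term combines with the $\nabla Z_i \nabla Z_i^\top / Z_i^2$ piece coming out of the It\^o correction and cancels exactly under the hypothesis $\bar{R}_i^{-1} = \bar{\sigma}_i\bar{\sigma}_i^\top/\lambda_i$. What remains is linear in $Z_i$ and, after reassembling the per-agent block terms, coincides with (7). Finally, (7) is recognized as the backward Kolmogorov equation for the uncontrolled joint diffusion $\bar{y}_i$ with potential $q_i/\lambda_i$ and terminal data $\exp(-\phi_i/\lambda_i)$, so the Feynman--Kac representation yields (8).

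The main obstacle is the cancellation in the Cole--Hopf step: one must check that the ``noise--control matching'' condition $\lambda_i \bar{R}_i^{-1} = \bar{\sigma}_i\bar{\sigma}_i^\top$, together with the fact that $\bar{B}_i$, $\bar{R}_i$, and $\bar{\sigma}_i$ all share a block-per-agent structure, makes the quadratic form $\nabla V_i^\top \bar{B}_i\bar{R}_i^{-1}\bar{B}_i^\top \nabla V_i$ coincide exactly with the trace term $\mathrm{tr}(\bar{B}_i\bar{\sigma}_i\bar{\sigma}_i^\top\bar{B}_i^\top \nabla Z_i\nabla Z_i^\top/Z_i^2)$ produced by the log transform. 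Once this algebraic cancellation is verified, the remaining derivation of (7) and its Feynman--Kac solution (8) is routine, and the theorem follows.
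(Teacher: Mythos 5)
Your proposal is correct and follows essentially the same route as the paper's Appendix~I proof: Bellman's principle plus It\^o's lemma to obtain the joint HJB equation, first-order optimality of the convex-quadratic control term to get $\bar{u}_i^*$, and the Cole--Hopf substitution in which the quadratic gradient term cancels against the $\nabla Z_i \nabla Z_i^\top / Z_i^2$ contribution under $\bar{R}_i = (\bar{\sigma}_i\bar{\sigma}_i^\top/\lambda_i)^{-1}$, followed by Feynman--Kac. You have also correctly identified the block-per-agent cancellation as the one step requiring verification, which is precisely where the paper's equations~\eqref{App1_Eq4} and~\eqref{App1_Eq5} do the work.
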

	\begin{proof}
		See \hyperref[AppI]{Appendix~I} for the proof.
	\end{proof}
	
	The condition $\bar{R}_i = (\bar{\sigma}_i\bar{\sigma}_i^\top / \allowbreak \lambda_i)^{-1}$ implies that high control cost is assigned to a control channel with low variance noise, while a control channel with high variance noise has cheap control cost. With some auxiliary techniques this condition can be relaxed~\cite{Stulp_ICML_2012}.

	\subsection{Path Integral Approximation}
	While a closed-form solution for $Z_i(\bar{x}_i, t)$ is given in~\hyperref[thm1]{Theorem~1}, the expectation over all uncontrolled trajectories initiated at $(\bar{x}_i, t)$ is intractable to compute. A conventional approach in statistical physics for evaluating this expectation is to rewrite it as a path integral and approximate the integral with sampling methods. The following proposition gives the path integral formulae for the desirability function $Z_i(\bar{x}_i,t)$ and the joint optimal control action $\bar{u}^*_i(\bar{x}_i, t)$.

	\begin{proposition}\label{prop3} 
		Divide the time span from $t$ to $t_f$ into $K$ intervals of even length $\varepsilon > 0$, $t = t_0 < t_1 < \cdots < t_K = t_f$, and let $\bar{x}_i^{(k)}  = [\bar{x}_{i(n)}^{(k)\top}, \bar{x}_{i(d)}^{(k)\top}]^\top$ denote the joint trajectory segment on time interval $[t_{k-1}, t_k)$ subject to joint dynamics~\eqref{dynamics}, zero control $\bar{u}_i = 0$, and initial condition $\bar{x}_i(t) = \bar{x}_i^{(0)}$. The desirability function~\eqref{Z_Solution} can then be reformulated as a path integral
		\begin{align}
		&Z_i(\bar{x}_i,t) =  \lim_{\varepsilon \downarrow 0}  \int   \exp\Big(  -\tilde{S}_i^{\varepsilon, \lambda_i}(\bar{x}_i^{(0)}, \bar{\ell}_i, t_0)  \\
		& \hspace{98pt}- K D |\mathcal{\bar N}_i| / 2 \cdot \log (2\pi \varepsilon) \Big)  d\bar{\ell}_i, \nonumber
		\end{align}	
		where the path variable $\bar \ell_i = (\bar{x}^{(1)}_i, \cdots, \bar{x}^{(K)}_i)$ represents all uncontrolled trajectories of subsystem $\mathcal{\bar N}_i$ starting at $(\bar{x}_i, t)$, and the generalized path value
		\begin{align}\label{Gen_Path_Value}
		& \tilde{S}_i^{\varepsilon, \lambda_i}(\bar{x}_i^{(0)}, \bar{\ell}_i, t_0) = \frac{\phi_i(\bar{x}^{(K)}_i)}{\lambda_i} + \frac{\varepsilon}{\lambda_i}\sum_{k=0}^{K-1}q_i(\bar{x}^{(k)}_i, t_k)  \allowdisplaybreaks\\
		&\hspace{25pt} + \frac{\varepsilon}{2\lambda_i} \sum_{k=0}^{K-1}  \left\|  \alpha_i^{(k)} 	\right\|^2_{\left(H_i^{(k)}\right)^{-1}} +   \frac{1}{2}\sum_{k=0}^{K-1} \log \left|H_i^{(k)}\right| \nonumber
		\end{align}	
		with $\alpha_i^{(k)} = (\bar{x}_{i(d)}^{(k+1)} - \bar{x}_{i(d)}^{(k)}) / \varepsilon -\bar{f}_{i(d)}(\bar{x}^{(k)}_i, t_k)$ and  $H_i^{(k)}  =  \lambda_i   \bar{B}_{i(d)}(\bar{x}^{(k)}_i)\bar{R}_i^{-1} \bar{B}_{i(d)}(\bar{x}^{(k)}_i)^\top = \bar{B}_{i(d)}(\bar{x}^{(k)}_i)  \bar{\sigma}_i\bar{\sigma}_i^\top \cdot \bar{B}_{i(d)}(\bar{x}^{(k)}_i)^\top$. Hence, the joint optimal control action for subsystem $\bar{\mathcal{N}}_i$ can then be reformulated as a path integral
		\begin{align}\label{OptCtrlPath}
		&\bar{u}^*_i(\bar{x}_i, t) = \bar\sigma_i \bar\sigma_i^\top \bar{B}_{i(d)}(x_i)^\top \cdot \lim_{\varepsilon \downarrow 0} \int  \tilde{p}^*_i(\bar{\ell}_i | \bar{x}_i^{(0)}, t_0) \\
		&\hspace{150pt}\times \tilde{u}_i(\bar{x}_i^{(0)}, \bar{\ell}_i, t_0) \ d\bar{\ell}_i,\nonumber
		\end{align}	
		where the optimal path distribution is
		\begin{equation}\label{OptPathDist}
		\tilde{p}^*_i(\bar{\ell}_i | \bar{x}_i^{(0)}, t_0) = \frac{\exp( -\tilde{S}_i^{\varepsilon, \lambda_i}(\bar{x}_i^{(0)}, \bar{\ell}_i, t_0) )}{\int \exp( -\tilde{S}_i^{\varepsilon, \lambda_i}(\bar{x}_i^{(0)}, \bar{\ell}_i, t_0) ) \ d\bar{\ell}_i },
		\end{equation}
		and the initial control vector is
		\begin{equation}\label{IniCtrl}
		\tilde{u}_i(\bar{x}_i^{(0)}, \bar{\ell}_i, t_0)  = -\frac{\varepsilon}{\lambda_i}\nabla_{\bar{x}^{(0)}_{i(d)}}q_i(\bar{x}^{(0)}_i, t_0) + (H_i^{(0)})^{-1} \alpha_i^{(0)}.
		\end{equation}
	\end{proposition}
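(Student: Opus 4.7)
The plan is to derive both path integrals from the closed-form representation of $Z_i$ given in Theorem~1, via time discretization and, for the control action, through the Cole-Hopf relation $V_i=-\lambda_i\log Z_i$. The argument is essentially a Feynman--Kac path-integral construction adapted to the partitioned (directly / non-directly actuated) dynamics of subsystem $\bar{\mathcal{N}}_i$.

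First I would establish the formula for $Z_i(\bar{x}_i,t)$. Partition $[t,t_f]$ into $K$ intervals of length $\varepsilon$ and observe that, because $\bar B_i=[0,\bar B_{i(d)}]^\top$, the uncontrolled diffusion $\bar y_i$ has noise only on the directly-actuated coordinates. Over one step, the increment $\bar y_{i(d)}^{(k+1)}-\bar y_{i(d)}^{(k)}-\varepsilon\bar f_{i(d)}(\bar y_i^{(k)},t_k)$ is, to leading order in $\varepsilon$, Gaussian with zero mean and covariance $\varepsilon H_i^{(k)}$, so its transition density equals $(2\pi\varepsilon)^{-D|\bar{\mathcal N}_i|/2}\,|H_i^{(k)}|^{-1/2}\exp\bigl(-\tfrac{\varepsilon}{2}\|\alpha_i^{(k)}\|^2_{(H_i^{(k)})^{-1}}\bigr)$. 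Taking the product over $k$ yields the path density, the Riemann sum approximates $\int_t^{t_f}q_i/\lambda_i\,d\tau$ by $(\varepsilon/\lambda_i)\sum_k q_i(\bar x_i^{(k)},t_k)$, and the boundary term $\phi_i(\bar x_i^{(K)})/\lambda_i$ appears verbatim. Collecting all exponents into $-\tilde S_i^{\varepsilon,\lambda_i}$ and pulling the constant $-KD|\bar{\mathcal N}_i|/2\cdot\log(2\pi\varepsilon)$ out gives exactly the claimed form upon sending $\varepsilon\downarrow 0$.

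Next I would derive the expression for $\bar u_i^*$. From Theorem~1, $\bar u_i^*=-\bar R_i^{-1}\bar B_i^\top\nabla_{\bar x_i}V_i$, and the Cole-Hopf transformation~\eqref{CH_Trans} converts this into $\bar u_i^*=\lambda_i\bar R_i^{-1}\bar B_i^\top\nabla_{\bar x_i}\log Z_i=\bar\sigma_i\bar\sigma_i^\top\bar B_{i(d)}^\top\nabla_{\bar x_{i(d)}^{(0)}}\log Z_i$, where the block structure of $\bar B_i$ collapses the gradient onto the directly-actuated coordinates. Differentiating the path-integral representation of $Z_i$ under the integral sign produces $\nabla_{\bar x_{i(d)}^{(0)}}\log Z_i=-\int\tilde p_i^*\,\nabla_{\bar x_{i(d)}^{(0)}}\tilde S_i^{\varepsilon,\lambda_i}\,d\bar\ell_i$ with the optimal path distribution~\eqref{OptPathDist}. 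The gradient $\nabla_{\bar x_{i(d)}^{(0)}}\tilde S_i^{\varepsilon,\lambda_i}$ has three sources: the running cost term contributes $(\varepsilon/\lambda_i)\nabla q_i$; the $k=0$ quadratic-in-$\alpha$ term contributes, through $\nabla_{\bar x_{i(d)}^{(0)}}\alpha_i^{(0)}=-I/\varepsilon+O(1)$, the dominant piece proportional to $(H_i^{(0)})^{-1}\alpha_i^{(0)}$; and the remaining $\log|H_i^{(0)}|$ and $\nabla\bar f_{i(d)}$ pieces are $O(\varepsilon)$ and drop out as $\varepsilon\downarrow 0$. Reassembling produces the initial control vector $\tilde u_i$ in~\eqref{IniCtrl} and the final formula~\eqref{OptCtrlPath}.

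I expect the main obstacle to be a careful accounting of which terms survive in the $\varepsilon\downarrow 0$ limit when differentiating the path integral, especially the sub-leading contributions coming from the state dependence of $\bar f_{i(d)}$, $\bar B_{i(d)}$ (hence $H_i^{(k)}$), and the log-determinant, all of which must be shown to vanish relative to the $-I/\varepsilon$ piece of $\nabla\alpha_i^{(0)}$. A secondary, more technical hurdle is justifying the interchange of the limit with the Lebesgue integral and the differentiation under the integral sign; this is standard once one has uniform integrability of the discrete path measures, but it is the step where the formal Feynman--Kac manipulation must be upgraded to a rigorous statement.
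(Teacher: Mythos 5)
Your derivation of the path integral for $Z_i$ follows the paper's route exactly (one-step Gaussian kernels on the directly actuated coordinates with covariance $\varepsilon H_i^{(k)}$, Riemann sum for the running cost, collection of normalizers into the $\log(2\pi\varepsilon)$ and $\log|H_i^{(k)}|$ terms), and the reduction of $\bar u_i^*$ to $-\int \tilde p_i^*\,\nabla_{\bar x_{i(d)}^{(0)}}\tilde S_i^{\varepsilon,\lambda_i}\,d\bar\ell_i$ via \eqref{CH_Trans} and \eqref{OptimalAction} is also the same. The gap is in how you dispose of the sub-leading terms of that gradient. You assert that the $\log|H_i^{(0)}|$ piece is $O(\varepsilon)$ and drops out; it is not. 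Its gradient $\tfrac12\nabla_{\bar x_{i(d)}^{(0)}}\log|H_i^{(0)}|$ carries no factor of $\varepsilon$ and is $O(1)$ whenever $\bar B_{i(d)}$ depends on the state. Moreover, your expansion of the quadratic term only differentiates $\alpha_i^{(0)}$ through its explicit $-\bar x_{i(d)}^{(0)}/\varepsilon - \bar f_{i(d)}$ dependence and omits the contribution from the state dependence of $H_i^{(0)}$ itself, namely $\tfrac{\varepsilon}{2}\,(\alpha_i^{(0)})^\top\bigl[\nabla_{\bar x_{i(d)}^{(0)}}(H_i^{(0)})^{-1}\bigr]\alpha_i^{(0)}$. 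This term is also $O(1)$, not $O(\varepsilon)$, because $\|\alpha_i^{(0)}\|^2$ scales like $1/\varepsilon$ under the path measure.

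The correct mechanism, which the paper uses, is that these two $O(1)$ terms cancel each other \emph{after} integration against the optimal path distribution \eqref{OptPathDist}: using $\mathbb{E}[\alpha_i^{(0)}(\alpha_i^{(0)})^\top]\approx H_i^{(0)}/\varepsilon$ together with the identity $\nabla\log|H| = \mathrm{tr}(H^{-1}\nabla H) = -\mathrm{tr}(H\,\nabla(H^{-1}))$, one finds that the expected quadratic correction equals $-\tfrac12\nabla_{\bar x_{i(d)}^{(0)}}\log|H_i^{(0)}|$, which exactly annihilates the log-determinant gradient; only then does \eqref{IniCtrl} survive. Your argument as written would only be valid when $\bar B_{i(d)}$ is state-independent (as in the UAV example), whereas the proposition is stated for state-dependent control matrices. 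Your treatment of the $\nabla\bar f_{i(d)}$ term is fine, since it genuinely carries a factor $\varepsilon$ against an $O(1)$ expectation of $\alpha_i^{(0)}$.
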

	\begin{proof}
		See \hyperref[App2]{Appendix II} for the proof.
	\end{proof}
	
	We then approximate the PIC~\eqref{OptCtrlPath} and optimal path distribution~\eqref{OptPathDist} with MC method. Given a batch of uncontrolled trajectories $\mathcal{Y}_i = \{ (\bar{x}_i^{(0)}, \bar{\ell}_i^{[y]})  \}_{y = 1, \cdots, Y}$, we can estimate the optimal path distribution~\eqref{OptPathDist} with the following sampling estimator
	\begin{equation}\label{Dist_Estimate}
	\tilde{p}^*_i(\bar{\ell}_i^{[y]} | \bar{x}_i^{(0)}, t_0) \approx \frac{\exp( -\tilde{S}_i^{\varepsilon, \lambda_i}(\bar{x}_i^{(0)}, \bar{\ell}_i^{[y]}, t_0) )}{\sum_{y=1}^{Y} \exp( -\tilde{S}_i^{\varepsilon, \lambda_i}(\bar{x}_i^{(0)}, \bar{\ell}^{[y]}_i, t_0) ) },
	\end{equation}
	where $\tilde{S}_i^{\varepsilon, \lambda_i}(\bar{x}_i^{(0)}, \bar{\ell}_i^{[y]}, t_0)$ denotes the generalized path value of sampled trajectory $(\bar{x}_i^{(0)}, \bar{\ell}_i^{[y]})$. Hence, an estimator for joint optimal control action~\eqref{OptCtrlPath} can be
	\begin{equation}\label{Ctrl_Estimate}
	\bar{u}^*_i= \bar\sigma_i \bar\sigma_i^\top \bar{B}_{i(d)}(x_i)^\top  \sum_{y=1}^{Y} \tilde{p}^*_i(\bar{\ell}^{[y]}_i | \bar{x}_i^{(0)}, t_0)  \tilde{u}_i(\bar{x}_i^{(0)}, \bar{\ell}^{[y]}_i, t_0),
	\end{equation}
	where $\tilde{u}_i(\bar{x}_i^{(0)}, \bar{\ell}^{[y]}_i, t_0)$ is the initial control vector of sampled trajectory $(\bar{x}_i^{(0)}, \bar{\ell}_i^{[y]})$. For a single agent, the sampling procedure can be expedited by the parallel computation of GPU units~\cite{Williams_JGCD_2017}. Meanwhile, instead of generating the joint trajectory set $\mathcal{Y}_i$ from a single agent, we can exploit the computation resource of MAS by letting each agent to sample its local trajectories $\{ (x_i^{(0)}, \ell_i^{[y]})  \}_{y = 1, \cdots, Y}$ and assembling the joint trajectory set $\mathcal{Y}_i$ via communication. With the estimation of joint optimal control action from~\eqref{Ctrl_Estimate}, central agent $i$ acts by following the local control action $u^*_i(\bar{x}_i, t)$ extracted from $\bar{u}^*_i(\bar{x}_i, t)$. \hyperref[alg1]{Algorithm~1} summaries the procedures of cooperative PIC in stochastic MASs.
	\SetAlCapNameFnt{\small}
	\SetAlCapFnt{\small}
	\setlength{\textfloatsep}{1em}
	\setlength{\algoheightrule}{0.75pt} 
	\setlength{\algotitleheightrule}{0.3pt} 
	\begin{algorithm}\label{alg1}\small
		\caption{\small Cooperative path integral control algorithm}
		\SetAlgoLined
		\KwIn{agent set $\mathcal{D}$, communication network $\mathcal{G}$, initial time $t_0$, exit time $t_f$, initial states $x_i^{t_0}$, exit states $x_i^{t_f}$, joint state-related costs $q_i(\bar{x}_i)$, control weight matrices $\bar{R}_i$, and exit costs $\phi(x_t^{t_f})$.}
		\renewcommand{\KwResult}{\textbf{Initialization: }} 
		\KwResult{factorial subsystems $\bar{\mathcal{N}}_i$}.\\
		\textbf{Planning \& Execution: }\\
		\For{$t < t_f$ \rm{or} $\bar{x}_i \notin \bar{\mathcal{B}}_i$}{
			\For{$i \in \mathcal{D} = \{1, \cdots, N\}$}{
				Measure joint state $\bar{x}_i(t)$ by collecting state information from neighboring agents $j \in \mathcal{N}_i$\;
				Generate uncontrolled trajectory set $\mathcal{Y}_i$ by sampling or collecting data from neighboring agents\;
				Evaluate generalized path value $\tilde{S}_i^{\varepsilon,\lambda_i}(\bar{x}_i^{(0)}, \bar{\ell}^{[y]}_i,  t_0)$ and initial control $\tilde{u}_i(\bar{x}_i^{(0)}, \bar{\ell}^{[y]}_i,  t_0)$ of each sampled trajectory $(\bar{x}_i^{(0)}, \bar{\ell}_i^{[y]})$ in $\mathcal{Y}_i$ by~\eqref{Gen_Path_Value} and~\eqref{IniCtrl}\;
				Estimate the optimal path distribution $\tilde{p}^*_i(\bar{\ell}_i^{[y]} | \bar{x}_i^{(0)}, t_0)$  and joint optimal control action $\bar{u}^*_i(\bar{x}_i, t)$ by~\eqref{Dist_Estimate} and \eqref{Ctrl_Estimate}\;
				Extract and execute local control action $u_i^*(\bar{x}_i, t)$ from joint optimal control action $\bar{u}^*_i(\bar{x}_i, t)$\;
			}
		} 
	\end{algorithm}

	\section{Simulation Example}\label{sec4}
	We demonstrate the cooperative PIC algorithm in \hyperref[alg1]{Algorithm~1} with a team of cooperative UAVs. Cooperation among UAVs is essential for the tasks that cannot be accomplished by a single UAV and demand multiple UAVs, such as information communication, lifting and carrying heavy load, and patrol with synthetic sensors, which require cooperative UAVs to fulfill some certain constraints, \textit{e.g.} flying closely or maintaining an identical orientation or speed. First, we consider a UAV team with three agents and subject to the communication network in \hyperref[fig2]{Figure~2}.
	\begin{figure}[htpb]\label{fig2} \vspace{-0.5em}
		\centerline{\includegraphics[width=0.4\columnwidth]{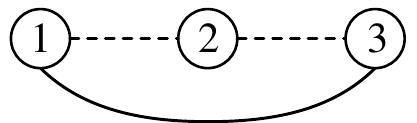}}
		\caption[no botld]{Communication network of a cooperative UAV team with $3$ agents.}
	\end{figure}\vspace{-0.5em}
	UAV 1 and 3, subject to their correlated running cost functions respectively, are tightly connected such that they can cooperate with each other while flying towards their destinations. By contrast, UAV 2, subject to an independent running cost function and only coupled with UAV 1 and 3 via their terminal costs $\phi_i(\bar{x}_i^{t_f})$ as in~\cite{Broek_2008_JAIR}, is loosely connected with other UAVs and will fly to its destination independently. Each UAV is described by the following UAV dynamics~\cite{Broek_2008_JAIR, Yao_CDC_2016}:
	\begin{equation}\label{UAV_Model}
	\setlength\arraycolsep{1pt}
	{\scalefont{0.7} \left(\begin{matrix}
		dx_i\\
		dy_i\\
		dv_i\\
		d\varphi_i
		\end{matrix}\right) = 
		\left(\begin{matrix}
		v_i \cos \varphi_i\\
		v_i \sin \varphi_i\\
		0\\
		0
		\end{matrix}\right) dt +  \begin{pmatrix}
		0 & 0\\
		0 & 0\\
		1 & 0\\
		0 & 1
		\end{pmatrix} \left[ \left( \begin{matrix}
		u_i\\
		\omega_i
		\end{matrix}  \right) dt + \begin{pmatrix}
		\sigma_i & 0\\
		0 & \nu_i
		\end{pmatrix}dw_i
		\right], }
	\end{equation}
	where $(x_i, y_i)$, $v_i$, and $\varphi_i$ denote the position, forward velocity, and heading angle of the $i$-th UAV, respectively; forward acceleration $u_i$ and angular velocity $\omega_i$ are the control inputs, and disturbance $w_i$ is a standard Brownian motion. Control matrix $\bar{B}_i(x_i)$ is constant in~\eqref{UAV_Model}, and we set noise level parameters $\sigma_i = 0.1$ and $\nu_i = 0.05$ in simulation. In order to achieve the requirements that UAV 1 and 3 fly closely towards their destination, and UAV 2 independently flies to its destination, we design the following state-related cost functions
	\begin{align}\label{UAV_RunningCost}
	q_1(\bar{x}_1)  & = w_{11} \cdot (\| (x_1, y_1) - (x_1^{t_f}, y_1^{t_f}) \|_2 - d^{\max}_1) \nonumber \allowdisplaybreaks\\
	& \hspace{35pt} + w_{13} \cdot ( \| (x_1, y_1) - (x_3, y_3) \|_2 - d^{\max}_{13}),\nonumber \allowdisplaybreaks\\
	q_2(\bar{x}_2) & = w_{22}  \cdot (\| (x_2, y_2) - (x_2^{t_f}, y_2^{t_f}) \|_2 - d^{\max}_2),\\
	q_3(\bar{x}_3) & = w_{33} \cdot (\| (x_3, y_3) - (x_3^{t_f}, y_3^{t_f}) \|_2 - d^{\max}_3)  \nonumber\\
	& \hspace{35pt} + w_{31} \cdot ( \| (x_3, y_3) - (x_1, y_1) \|_2 - d^{\max}_{31}),\nonumber
	\end{align}
	where $w_{ii}$ is the weight that contributes to driving agent $i$ towards its exit state $x_i^{t_f}$; $w_{ij}$ is the weight related to the distance between UAVs $i$ and $j$, and $d^{\max}_i$ and $d_{ij}^{\max}$ are the regularization terms for numerical stability, which are respectively assigned by the initial or maximal values of $\| (x_i, y_i) - (x_i^{t_f}, y_i^{t_f}) \|_2$ and $\| (x_i, y_i) - (x_j, y_j) \|_2$.

	In order to verify the improvements brought by cooperative or joint cost functions, we demonstrate an identical flight task while alternating the value of $w_{ij}$ in~\eqref{UAV_RunningCost}. When $w_{ij} > 0$, UAVs 1 and 3 cooperate with each other by flying closely together, while when $w_{ij} = 0$, we restore the factorial running cost functions considered in~\cite{Broek_2008_JAIR}, and UAVs are only correlated via their terminal cost functions $\phi_i(\bar{x}_i^{t_f})$. For three UAVs with initial states $x_1^{0} = (5, 5, 0.3, 0)^\top$, $x_2^{0} = (5, 20, 0.3, 0)^\top$ and $x_3^0 = (5, 35, 0.3, 0)^\top$, we want them to arrive at an identical terminal state $x_i^{t_f} = (35, 20, 0, 0)^\top$ in $t_f = 18$ sec. The period of each control cycle is $0.2$ sec. When generating the trajectory roll-outs $\mathcal{Y}_i$, the time interval from $t$ to $t_f$ is divided into $K = 8$ intervals of equal length $\varepsilon$, \textit{i.e.} $\varepsilon K  = t_f - t$, until $\varepsilon$ becomes less than $0.2$ sec. The noise level parameters are increased to $\sigma_i = 0.75$ and $\nu_i = 0.65$ to improve the sampling and exploration efficiency. The size of $\mathcal{Y}_i$ in estimator~\eqref{Ctrl_Estimate} is $400$ sampled trajectories for each control cycle. Control matrices $\bar{R}_i$ are chosen as identity matrices. The trajectories of UAVs and the relative distance between UAVs 1 and 3 subject to different cost functions are respectively presented in~\hyperref[fig3]{Figure~3} and ~\hyperref[fig4]{Figure~4}. The significant reduction of distance between UAVs 1 and 3 in \hyperref[fig3]{Figure~3}, and \hyperref[fig4]{Figure~4} corroborates that our cooperative PIC algorithm facilitates the interaction and cooperation among the neighboring agents of MASs. For other types of cooperation, such as maintaining an identical orientation, one can realize them by accordingly designing the state-related cost functions. To demonstrate that our cooperative PIC is able to address more complicated tasks, such as obstacle avoidance and multiple objectives, we introduce obstacles (shaded regions) and assign different boundary states to agents in \hyperref[fig5]{Figure~5}.
	\begin{figure}[htpb!]\label{fig3}\vspace{-1em}
		\centerline{\includegraphics[width=0.88\columnwidth]{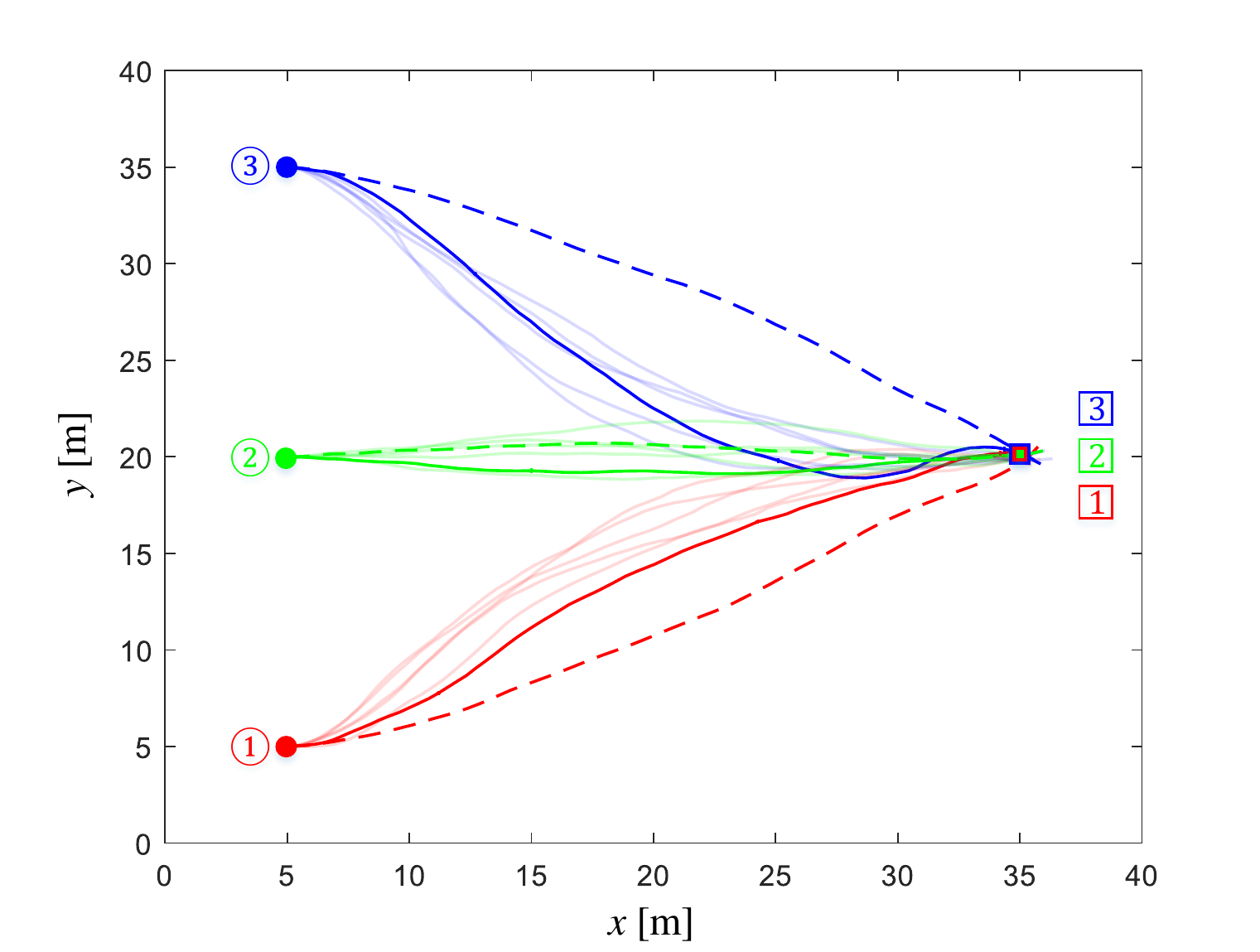}}\vspace{-0.5em}
		\caption{UAV trajectories subject to joint and independent running cost functions. Solid dots and squares are respectively the initial positions and destinations of UAVs. Red, green and blue lines respectively denote the trajectories of UAVs 1, 2 and 3. The solid (transparent) lines are the UAV trajectories subject to joint running cost functions when $w_{11} = w_{33} = 0.7, w_{13} = w_{31} = 1.4$, and $w_{22} = 0.9$ in~\eqref{UAV_RunningCost}. The dashed lines are the UAV trajectories subject to factorial running cost functions when $w_{11} = w_{33} = 0.7, w_{13} = w_{31} = 0$, and $w_{22} = 0.9$ in~\eqref{UAV_RunningCost}.}
	\end{figure}
	\begin{figure}[htpb!]\label{fig4}\vspace{-2em}
		\centerline{\includegraphics[width=0.88\columnwidth]{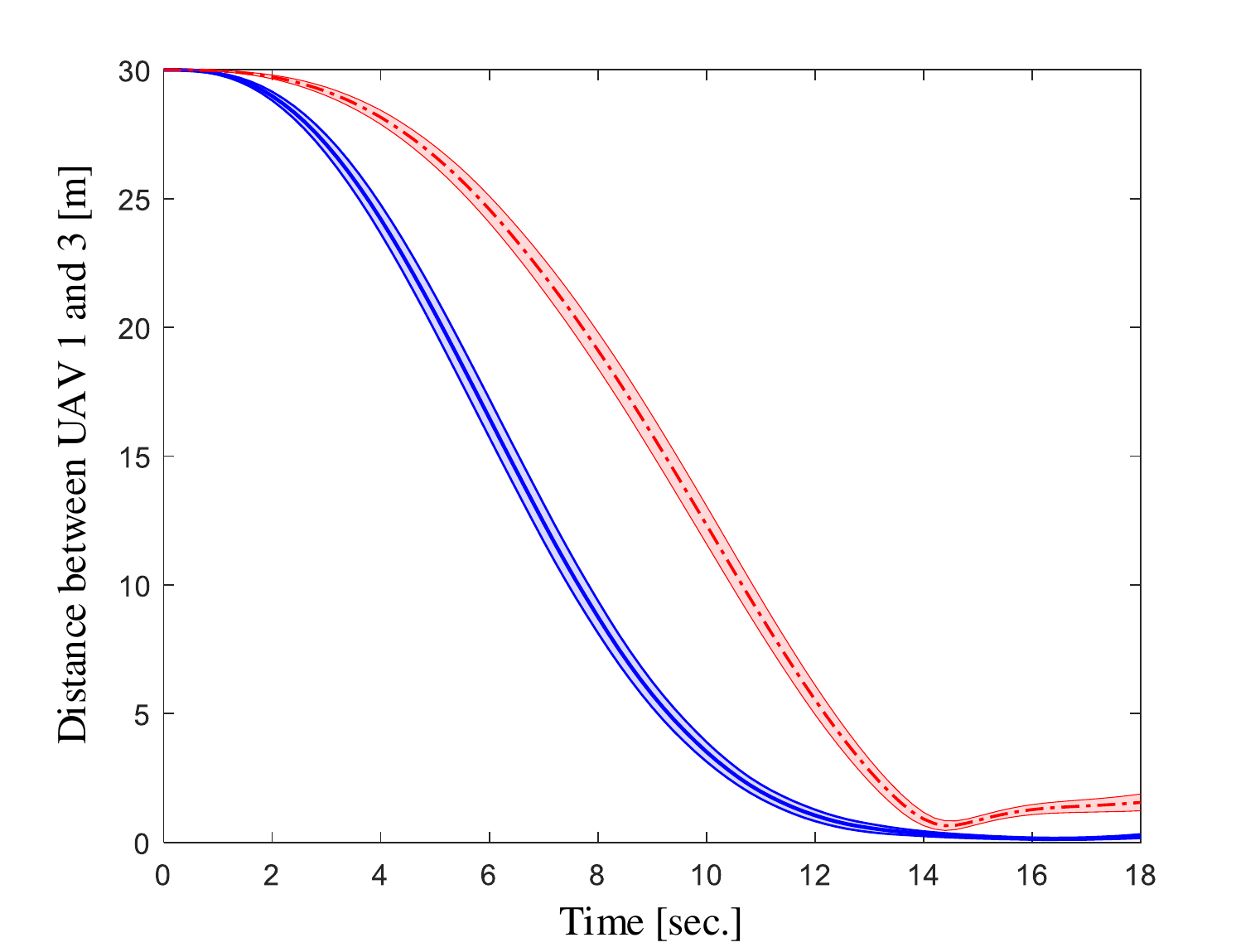}}\vspace{-0.5em}
		\caption{Relative distances between UAVs 1 and 3 subject to joint and independent running cost functions from 100 trails. Blue solid line and red dashed line are the mean distances between UAVs 1 and 3 subject to joint state-related cost with $w_{11} = w_{33} = 0.7, w_{22} = 0.9, w_{13} = w_{31} = 1.4$ and independent state-related cost with $w_{11} = w_{33} = 0.7, w_{22} = 0, w_{13} = w_{31} = 1.4$, respectively. The heights of strips represent one standard deviation of 100 trails.}
	\end{figure}
	\begin{figure}[htpb]\label{fig5}\vspace{-1em}
		\centerline{\includegraphics[width=0.88\columnwidth]{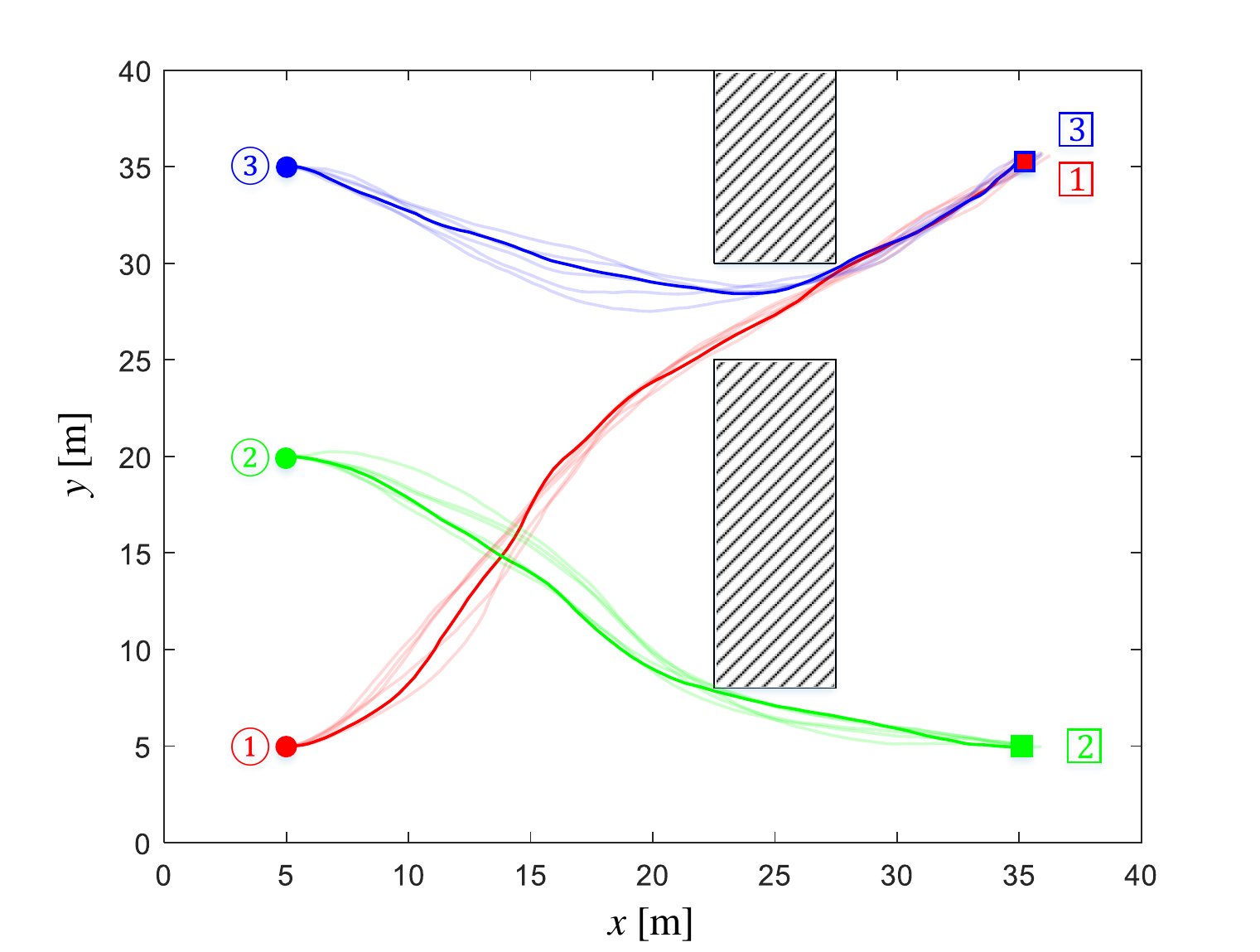}}\vspace{-0.5em}
		\caption{UAV trajectories subject to obstacle avoidance and multiple boundary states. The shaded areas represent obstacles, and the implications of other symbols are the same as in \hyperref[fig3]{Figure~3}. The coefficients of state-related cost~\eqref{UAV_RunningCost} are identical to those in \hyperref[fig3]{Figure~3}, except that a large penalty, \textit{e.g.} $q_i(\bar{x}_i) = 120$, is assigned when central agent $i$ is inside the shaded regions.}
	\end{figure}
	
	 In the end, we test our cooperative PIC scheme on a larger UAV network with 9 agents as shown in \hyperref[fig6]{Figure~6}. For UAV $i \in \{1, \cdots, 9\}$, the initial state is at $x_i^{0} = (10, 100-10\cdot i, 0.5, 0)$, and the state-related cost function is defined by
		\begin{align}
		q_i(\bar{x}_i) = & w_{ii} \cdot (\|(x_i, y_i) - (x_i^{t_f}, y_i^{t_f})\|_2 - d_i^{\max})  \allowdisplaybreaks \\
		&\hspace{5pt}+w_{i, i-1} \cdot (\|(x_i, y_i) - (x_{i-1}, y_{i-1})\| - d^{\max}_{i, i-1}) \nonumber \allowdisplaybreaks\\
		&\hspace{5pt}+ w_{i, i+1} \cdot (\|(x_i, y_i) - (x_{i+1}, y_{i+1})\| - d^{\max}_{i, i+1}) \nonumber
		\end{align}
		where $w_{i, j}$ is the weight related to the distance between agent $i$ and $j$; $w_{i, j} = 0$ when $j = 0$ or $10$; 
		$d_{i,j}^{\max}$ is the regularization term for numerical stability, which is assigned by the initial distance between agent $i$ and $j$ in this example, and the rest of notations are the same as in~\eqref{UAV_RunningCost}. Agents 1 to 6, which share an identical terminal state $A$ at $x_i^{t_f} = (90, 65, 0, 0)^\top$ with $t_f = 40$ sec, are tightly connected via their correlated running cost functions with $w_{ii} = 0.5$, $w_{1,2} =w_{2,3} = w_{3,4} = w_{6, 5} = 1$, $w_{2, 1} = w_{3, 2} = 0$ and $w_{4, 3} = w_{4, 5} = w_{5, 4} = w_{5, 6} = 0.5$. The connection between agents 6 and 7 is loose with $w_{6, 7} = w_{7, 6} = 0$. Agents 7 and 8, which have an identical exit state $B$ at $x_i^{t_f} = (90, 25, 0, 0)$, are tightly correlated with $w_{77} = w_{88} = 0.5$ and $w_{7, 8} = w_{8, 7} = 1$. Agent 9 is loosely connected with agent 8 with $w_{99} = 1$, $w_{9, 8} = 0$, and exit state $C$ at $x_{9}^{t_f} = (90, 10, 0, 0)$. Other parameters are the same as in the first example. A simulation result is presented in \hyperref[fig7]{Figure~7}. For some network topology, \textit{e.g.} loop as in \hyperref[fig2]{Figure~2}, complete binary tree or line as in \hyperref[fig6]{Figure~6}, 
		in which the size of every factorial subsystem is tractable, increasing the total amount of agents in network will not significantly escalate the computation burden on each agent for this cooperative PIC algorithm. 
	\begin{figure}[htpb]\label{fig6}
		\centerline{\includegraphics[width=0.85\columnwidth]{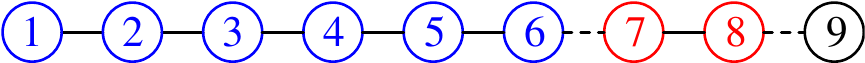}}\vspace{-0.5em}
		\caption{Communication network of a cooperative UAV team with 9 agents.}
	\end{figure}
	\begin{figure}[htpb]\label{fig7} \vspace{-2em}
		\centerline{\includegraphics[width=0.88\columnwidth]{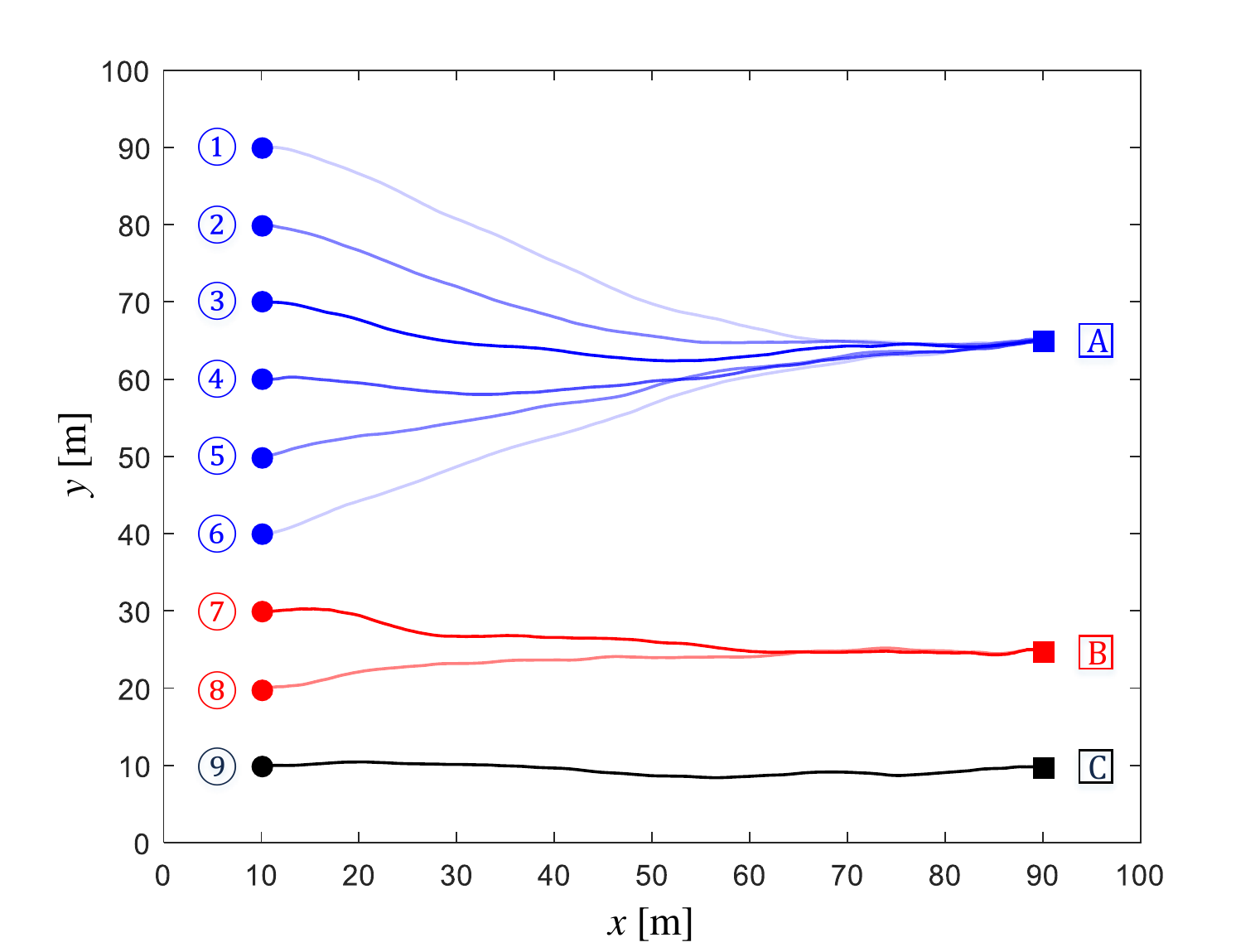}}\vspace{-0.5em}
		\caption{Trajectories of a cooperative UAV team subject to correlated and independent running costs functions.}
	\end{figure}

	\vspace{-1em}
		
	\section{Conclusion}\label{sec5}
	A cooperative path integral control algorithm for stochastic MASs has been investigated in this paper. A distributed control framework that relies on the local observations of agents and hence circumvents the \textit{curse of dimensionality} when the number of agents increases has been proposed, and a cooperative path integral algorithm has been designed to guide each agent to cooperate with its neighboring agents and behave optimally to minimize their joint cost function. Simulation examples have been presented to verify the algorithm.

	\bibliographystyle{IEEEtran}
	\bibliography{Myref}

\begin{thebibliography}{10}
\providecommand{\url}[1]{#1}
\csname url@samestyle\endcsname
\providecommand{\newblock}{\relax}
\providecommand{\bibinfo}[2]{#2}
\providecommand{\BIBentrySTDinterwordspacing}{\spaceskip=0pt\relax}
\providecommand{\BIBentryALTinterwordstretchfactor}{4}
\providecommand{\BIBentryALTinterwordspacing}{\spaceskip=\fontdimen2\font plus
\BIBentryALTinterwordstretchfactor\fontdimen3\font minus
  \fontdimen4\font\relax}
\providecommand{\BIBforeignlanguage}[2]{{%
\expandafter\ifx\csname l@#1\endcsname\relax
\typeout{** WARNING: IEEEtran.bst: No hyphenation pattern has been}%
\typeout{** loaded for the language `#1'. Using the pattern for}%
\typeout{** the default language instead.}%
\else
\language=\csname l@#1\endcsname
\fi
#2}}
\providecommand{\BIBdecl}{\relax}
\BIBdecl

\bibitem{Cichella_CSM_2016}
V.~Cichella, R.~Choe, S.~B. Mehdi, E.~Xargay, N.~Hovakimyan, V.~Dobrokhodov,
  I.~Kaminer, A.~M. Pascoal, and A.~P. Aguiar, ``Safe coordinated maneuvering
  of teams of multirotor unmanned aerial vehicles: A cooperative control
  framework for multivehicle, time-critical missions,'' \emph{{IEEE} Contr.
  Syst. Mag.}, vol.~36, no.~4, pp. 59--82, 2016.

\bibitem{Ota_TSG_2012}
Y.~Ota, H.~Taniguchi, T.~Nakajima, K.~M. Liyanage, J.~Baba, and A.~Yokoyama,
  ``Autonomous distributed {V2G} ({Vehicle-to-Grid}) satisfying scheduled
  charging,'' \emph{IEEE Trans. on Smart Grid}, vol.~3, no.~1, pp. 559--564,
  2012.

\bibitem{Dorfler_TCNS_2016}
F.~Dorfler, J.~W. {Simpson-Porco}, and F.~Bullo, ``Breaking the hierarchy:
  Distributed control and economic optimality in microgrids,'' \emph{IEEE
  Trans. on Contr. of Net. Syst.}, vol.~3, no.~3, pp. 241--253, 2016.

\bibitem{Amato_CDC_2013}
C.~Amato, G.~Chowdhary, A.~Geramifard, N.~K. Ure, and M.~J. Kochenderfer,
  ``Decentralized control of partially observable {M}arkov decision
  processes,'' in \emph{IEEE 52nd Conf. on Decis. and Contr.}, Florence, Italy,
  2013.

\bibitem{Cao_TII_2013}
Y.~Cao, W.~Yu, W.~Ren, and G.~Chen, ``An overview of recent progress in the
  study of distributed multi-agent coordination,'' \emph{IEEE Trans. on Ind.
  Inform.}, vol.~9, no.~1, pp. 427--438, 2013.

\bibitem{Frank_2013}
F.~L. Lewis, H.~Zhang, K.~Hengster-Movric, and A.~Das, \emph{Cooperative
  Control of Multi-Agent Systems: Optimal and Adaptive Design
  Approaches}.\hskip 1em plus 0.5em minus 0.4em\relax Springer, 2013.

\bibitem{Oh_Auto_2015}
K.-K. Oh, M.-C. Park, and H.-S. Ahn, ``A survey of multi-agent formation
  control,'' \emph{Automatica}, vol.~53, pp. 424--440, 2015.

\bibitem{Qin_TIE_2017}
J.~Qin, Q.~Ma, Y.~Shi, and L.~Wang, ``Recent advances in consensus of
  multi-agent systems: A brief survey,'' \emph{IEEE Trans. on Ind. Electron.},
  vol.~64, no.~6, pp. 4972--4983, 2017.

\bibitem{Theodorou_JMLR_2010}
E.~A. Theodorou, J.~Buchli, and S.~Schaal, ``A generalized path integral
  control approach to reinforcement learning,'' \emph{J. Mach. Learn Res.},
  vol.~11, no.~11, pp. 3137--3181, 2010.

\bibitem{Fleming_AMO_1977}
W.~H. Fleming, ``Exit probabilities and optimal stochastic control,''
  \emph{Appl. Math. Opt.}, vol.~4, no.~1, pp. 329--346, 2007.

\bibitem{Todorov_NIPS_2007}
E.~Todorov, ``Linearly-solvable markov decision problems,'' in \emph{Advances
  in Neural Information Processing Systems}, Vancouver, Canada, 2007.

\bibitem{Pan_NIPS_2015}
Y.~Pan, E.~A. Theodorou, and M.~Kontitsi, ``Sample efficient path integral
  control under uncertainty,'' in \emph{Advances in Neural Information
  Processing Systems}, Montreal, Canada, 2015.

\bibitem{Todorov_NIPS_2009}
E.~Todorov, ``Compositionality of optimal control laws,'' in \emph{Advances in
  Neural Information Processing Systems}, Vancouver, Canada, 2009.

\bibitem{Williams_TRO_2018}
G.~Williams, P.~Drews, B.~Goldfain, J.~M. Rehg, and E.~A. Theodorou,
  ``Information-theoretic model predictive control: {T}heory and applications
  to autonomous driving,'' \emph{IEEE Trans. on Robot.}, vol.~34, no.~6, pp.
  1603--1622, 2018.

\bibitem{Gomez_KDD_2014}
V.~Gomez, H.~J. Kappen, J.~Peters, and G.~Neumann, ``Policy search for path
  integral control,'' in \emph{Joint European Conference on Machine Learning
  and Knowledge Discovery in Databases}, Dublin, Ireland, 2014.

\bibitem{Williams_JGCD_2017}
G.~Williams, A.~Aldrich, and E.~A. Theodorou, ``Model predictive path integral
  control: From theory to parallel computation,'' \emph{J. Guid. Control
  Dynam.}, vol.~40, no.~2, pp. 344--357, 2017.

\bibitem{Broek_2008_JAIR}
B.~Broek, W.~Wiegerinck, and B.~Kappen, ``Graphical model inference in optimal
  control of stochastic multi-agent systems,'' \emph{J Artif. Intell. Res.},
  vol.~32, pp. 95--122, 2008.

\bibitem{Anderson_Robotica_2014}
R.~P. Anderson and D.~Milutinovic, ``Stochastic optimal enhancement of
  distributed formation control using {K}alman smoothers,'' \emph{Robotica},
  vol.~32, pp. 305--324, 2014.

\bibitem{Nedic_TAC_2009}
A.~Nedic and A.~Ozdaglar, ``Distributed subgradient methods for multi-agent
  optimization,'' \emph{IEEE Trans. on Automat. Contr.}, vol.~54, no.~1, pp.
  48--61, 2009.

\bibitem{Johari_MOR_2004}
R.~Johari and J.~N. Tsitsiklis, ``Effciency loss in a network resource
  allocation game,'' \emph{Math. Oper. Res.}, vol.~29, no.~3, pp. 407--435,
  2004.

\bibitem{Voulgaris_CDC_2017}
P.~G. Voulgaris and N.~Elia, ``Social optimization problems with decentralized
  and selfish optimal strategies,'' in \emph{IEEE 56th Conf. on Decis. and
  Contr.}, Melbourne, Australia, 2017.

\bibitem{Stulp_ICML_2012}
F.~Stulp and O.~Sigaud, ``Path integral policy improvement with covariance
  matrix adaptation,'' in \emph{Int. Conf. on Machine Learning}, Vancouver,
  Canada, 2007.

\bibitem{Yao_CDC_2016}
W.~Yao, N.~Wan, and N.~Qi, ``Hierarchical path generation for distributed
  mission planning of {UAV}s,'' in \emph{IEEE 55th Conf. on Decis. and Contr.},
  Las Vegas, USA, 2016.

\bibitem{LeGall_2016}
J.~F.~L. Gall, \emph{Brownian Motion, Martingales, and Stochastic
  Calculus}.\hskip 1em plus 0.5em minus 0.4em\relax Springer, 2016.

\bibitem{Theodorou_2011}
E.~A. Theodorou, ``Iterative path integral stochastic optimal control: Theory
  and applications to motor control,'' Ph.D. dissertation, Dept. of Computer
  Sci., Univ. of Southern California, Los Angeles, California, USA, 2011.

\end{thebibliography}
	
	\section*{Appendix I: Proof of Theorem 1}\label{AppI}
	We first show that the joint optimality equation~\eqref{optimal_eq} is equivalent to the joint HJB equation~\eqref{sto_HJB} and derive the joint optimal control $\bar{u}_i^*$. Substituting~\eqref{Run_Cost} into~\eqref{optimal_eq} and letting $s$ be a time step between initial time $t$ and exit time $t_f$, the joint equation~\eqref{optimal_eq} can be rewritten as 
	\begin{align}\label{App1_Eq1}
	&V_i(\bar{x}_i, t) = \min_{\bar{u}_i} \mathbb{E}_{\bar{x}_i,t}^{\bar{u}_i} \bigg[ V_i(\bar{x}_i, s)  +  \int_{t}^{s} q_i(\bar{x}_i, \tau)   \\
	&\hspace{90pt} + \frac{1}{2}  \bar{u}_i(\bar{x}_i, \tau)^\top \bar{R}_i\bar{u}_i(\bar{x}_i, \tau) \ d\tau \bigg]. \nonumber
	\end{align}
	Dividing both sides of~\eqref{App1_Eq1} by $s-t$ and letting $s\rightarrow t$, we have
	\begin{align}\label{App1_Eq2}
	0 = \min_{\bar{u}_i} \mathbb{E}_{\bar{x}_i,t}^{\bar{u}_i} \bigg[ \frac{dV_i(\bar{x}_i, t)}{dt} & + q_i(\bar{x}_i, t)\\
	& +  \frac{1}{2} \bar{u}_i(\bar{x}_i, t)^\top\bar{R}_i\bar{u}_i(\bar{x}_i, t) \bigg]. \nonumber
	\end{align}
	Applying It\^o's lemma~\cite{LeGall_2016} to $dV(\bar{x}_i, t)$, dividing the result by $dt$, and taking the expectation over all possible trajectories starting at $(\bar{x}_i^t, t)$ and subject to $\bar{u}_i$, we arrive at
	\begin{align}\label{App1_Eq3}
	&\mathbb{E}_{\bar{x}_i,t}^{\bar{u}_i}\left[ \frac{dV_i(\bar{x}_i, t)}{dt} \right] = \frac{\partial V_i(\bar{x}_i, t)}{\partial t} \allowdisplaybreaks\\
	&\hspace{20pt} + \sum_{j \in \bar{\mathcal{N}}_i}  [f_j(x_j,t)  + B_j(x_j) u_j(\bar{x}_i, t) ]^\top \cdot \nabla_{x_j} V_i(\bar{x}_i,t) \nonumber \allowdisplaybreaks \\
	&\hspace{20pt} + \frac{1}{2} \sum_{j \in \bar{\mathcal{N}}_i}  \textrm{tr} \left( B_j(x_j)\sigma_j \sigma_j^\top B_j(x_j)^\top \cdot \nabla_{x_jx_j}V_i(\bar{x}_i, t) \right), \nonumber
	\end{align}
	where operators $\nabla_{x_i}$ and $\nabla^2_{x_i x_i}$ respectively refer to the gradient and Hessian matrix. Substituting~\eqref{App1_Eq3} into~\eqref{App1_Eq2}, the joint stochastic HJB equation~\eqref{sto_HJB} in \hyperref[thm2]{Theorem~2} is obtained. Taking the derivative of~\eqref{sto_HJB} w.r.t. $\bar{u}_i$ and setting the result to zero, we can obtain the joint optimal control action $\bar{u}_i^*(\bar{x}_i, t)$ in~\eqref{OptimalAction}. 
	
	We then formulate the joint HJB equation~\eqref{sto_HJB} as a linear PDE by the Cole-Hopf transformation~\eqref{CH_Trans}. Subject to \eqref{CH_Trans}, the agent-wise terms in~\eqref{sto_HJB} can be rewritten as
	\begin{equation} \label{App1_Eq4}{\scalefont{0.95} 
		\begin{split}
		&[B_j(x_j)  u_j(\bar{x}_i, t)]^\top \nabla_{x_j}V_i(\bar{x}_i, t)  + \frac{1}{2}{u}_{j}(\bar{x}_i,t)^\top {R}_{j}  u_j(\bar{x}_i, t) =\\
		&\hspace{6pt}\frac{-\lambda_i^2}{2  Z_i(\bar{x}_i, t)^2}  \nabla_{x_j} Z_i(\bar{x}_i,t)^\top \cdot B_j(x_j)R_j^{-1} B_j(x_j)^\top  \nabla_{x_j} Z_i(\bar{x}_i,t),
		\end{split}}
	\end{equation}
	\begin{equation}
	\begin{split}
	&\frac{1}{2} \textrm{tr}   \big(  B_j(x_j)\sigma_j  \sigma_j^\top  B_j(x_j)^\top   \nabla_{x_jx_j} V_i(\bar{x}_i, t)  \big) =  \frac{\lambda_i}{2Z_i(\bar{x}_i,t)^2} \label{App1_Eq5} \\ 
	&\hspace{8pt}\times\textrm{tr} \left( B_j(x_j)\sigma_j \sigma_j^\top B_j(x_j)^\top  \nabla_{x_j}Z_i(\bar{x}_i,t)  \nabla_{x_j}Z_i(\bar{x}_i,t)^\top  \right)\\
	&\hspace{20pt}+\frac{\lambda_i}{2Z_i(\bar{x}_i,t)} \textrm{tr}\left( B_j(x_j)\sigma_j \sigma_j^\top B_j(x_j)^\top  \nabla_{x_jx_j} Z_i(\bar{x}_i,t) \right).
	\end{split}
	\end{equation}
	With identity $R_i = (\sigma_i\sigma_i^\top / \lambda_i)^{-1}$ or $\bar{R}_i = (\bar{\sigma}_i\bar{\sigma}_i^\top / \allowbreak \lambda_i)^{-1}$, the quadratic terms in~\eqref{App1_Eq4} and~\eqref{App1_Eq5} can be canceled, which along with~\eqref{CH_Trans} give the linear PDE in~\eqref{Linear_PDE}. Applying  Feynman-Kac lemma~\cite{LeGall_2016} to~\eqref{Linear_PDE}, we can solve~\eqref{Linear_PDE} and obtain the closed-form solution~\eqref{Z_Solution}. This completes the proof.  \hfill\QED

	\section*{Appendix II: Proof of Proposition 2}\label{App2}
	We first formulate the desirability function $Z_i(\bar{x}_i,t)$ as a path integral. For brevity, we will omit some time arguments $t_k$ or $t_{k+1}$ in this proof. After partitioning the time interval $[t, t_f)$ into $K$ even-length intervals, we can rewrite the expectation in~\eqref{Z_Solution} as
	\begin{align}\label{App2_Eq1}
	Z_i(\bar{x}_i,t) &= \int d\bar{x}^{(1)}_i \cdots \int  \exp\left( -\frac{1}{\lambda_i} \phi_i(\bar{x}^{(K)}_i) \right)  \\
	&\hspace{30pt} \times \prod_{k=0}^{K-1} Z_i(\bar{x}_i^{(k+1)}, t_{k+1}; \bar{x}_i^{(k)}, t_k) \ d\bar{x}_i^{(K)},\nonumber
	\end{align}
	where the function $Z_i(\bar{x}_i^{(k+1)}, t_{k+1}; \bar{x}_i^{(k)}, t_k)$ is implicitly defined by $\int f(\bar{x}_{i}^{(k+1)}) \cdot Z_i(\bar{x}_i^{(k+1)}; \bar{x}_i^{(k)})  \ d\bar{x}_i^{(k+1)} = \mathbb{E}_{\bar{x}_i^{(k)}} \big[  f(\bar{x}_i^{(k+1)}) \cdot \exp\big( -\frac{1}{\lambda_i}\int_{t_i}^{t_{i+1}} q_i(\bar{y}_i, \tau) \ d\tau \big)  \big|  \bar{y}_i(t_k) = \bar{x}_i^{(k)} \big]$ for arbitrary functions $f(\bar{x}_i^{(k+1)})$. Expanding the preceding expectation,  in the limit of infinitesimal $\varepsilon$, we can  approximate $Z_i(\bar{x}_i^{(k+1)}; \bar{x}_i^{(k)})$ by
	\begin{equation}\label{App2_Eq2}
	Z_i(\bar{x}_i^{(k+1)}; \bar{x}_i^{(k)}) = p_i(\bar{x}_i^{(k+1)} | \bar{x}_i^{(k)})  \exp\left(-\frac{\varepsilon}{\lambda_i}  q_i(\bar{x}_i^{(k)}, t_k)    \right),
	\end{equation}
	where $p_i(\bar{x}_i^{(k+1)} | \bar{x}_i^{(k)})$ is the passive transition probability from $(\bar{x}_i^{(k)}, t_k)$ to $(\bar{x}_i^{(k+1)}, t_{k+1})$ and satisfies
	\begin{equation}\label{App2_Eq3}
	p_i(\bar{x}_i^{(k+1)}, t_{k+1} | \bar{x}_i^{(k)}, t_k) \propto p_i(\bar{x}_{i(d)}^{(k+1)}, t_{k+1} | \bar{x}_{i}^{(k)}, t_k).
	\end{equation}
	Since the directly actuated part of uncontrolled dynamics~\eqref{dynamics} satisfies $\bar{x}_{i(d)}^{(k+1)} \sim \mathcal{N}(\bar{x}_{i(d)}^{(k)} + \bar{f}_{i(d)}(\bar{x}^{(k)}_i,  t_k)    \varepsilon, \Sigma^{(k)}_i)$, \break where the covariance is given by $\Sigma^{(k)}_i = \varepsilon \bar{B}_{i(d)}(\bar{x}^{(k)}_i)   \bar{\sigma}_i  \bar{\sigma}^\top_i \cdot \bar{B}_{i(d)}(\bar{x}^{(k)}_i)^\top   = \varepsilon \lambda_i   \bar{B}_{i(d)}(\bar{x}^{(k)}_i)  \bar{R}_i^{-1}  \bar{B}_{i(d)}(\bar{x}^{(k)}_i)^\top = \varepsilon H_i^{(k)}$, and $H_i^{(k)} =  \bar{B}_{i(d)}(\bar{x}^{(k)}_i)   \bar{\sigma}_i   \bar{\sigma}^\top  \bar{B}_{i(d)}(\bar{x}^{(k)}_i)^\top = \lambda_i  \bar{B}_{i(d)}(\bar{x}^{(k)}_i) \cdot \break \bar{R}_i^{-1}   \bar{B}_{i(d)}(\bar{x}^{(k)}_i)^\top$, the transition probability $p_i(\bar{x}_{i(d)}^{(k+1)},  t_{k+1} | \bar{x}_i^{(k)}, t_k)$ in~\eqref{App2_Eq3} satisfies 
	\begin{equation}\label{App2_Eq4}
	p_i(\bar{x}_{i(d)}^{(k+1)}| \bar{x}_i^{(k)})   = \left| 2\pi \Sigma_i^{(k)} \right|^{-1/2}  \exp\left(-\frac{\varepsilon}{2} \left\|\alpha_i^{(k)}\right\|_{(H_i^{(k)})^{-1}} \right),
	\end{equation}
	where $\alpha_i^{(k)} = (\bar{x}_{i(d)}^{(k+1)} - \bar{x}_{i(d)}^{(k)}) / \varepsilon -\bar{f}_{i(d)}(\bar{x}^{(k)}_i, t_k)$. Substituting~(\ref{App2_Eq2})-(\ref{App2_Eq4}) into \eqref{App2_Eq1}, we obtain a path integral for the desirability function 
	\begin{equation}\label{App2_Eq5}
	Z_i(\bar{x}_i, t) = \lim_{\varepsilon \downarrow 0}Z_i^{(\varepsilon)}(\bar{x}^{(0)}_i, t_0),
	\end{equation}
	where the discretized desirability function is given by
	\begin{align*}
	&Z^{(\varepsilon)}_i(\bar{x}^{(0)}_i) = \int \exp\Big( -\tilde{S}_i^{\varepsilon, \lambda_i}(\bar{x}_i^{(0)}, \bar{\ell}_i) \\
	& \hspace{110pt} - {K D |\mathcal{\bar N}_i|} / {2}  \cdot \log (2\pi \varepsilon) \Big) d\bar{\ell}_i;
	\end{align*}
	with path variable $\bar \ell_i = (\bar{x}^{(1)}_i, \cdots, \bar{x}^{(K)}_i)$, and $KD|\bar{\mathcal{N}}_i| / 2 \cdot \log (2\pi \varepsilon)$ is a constant related to  numerical stability. The generalized path value is defined as $\tilde{S}_i^{\varepsilon, \lambda_i}(\bar{x}_i^{(0)}, \bar{\ell}_i) = S_i^{\varepsilon, \lambda_i}(\bar{x}_i^{(0)}, \bar{\ell}_i) + \frac{1}{2}\sum_{k=0}^{K-1}  \log |H_i^{(k)}|$ with the path value
	\begin{align}
	S^{\varepsilon, \lambda_i}_i = \frac{\phi_i(\bar{x}^{(K)}_i)}{\lambda_i} + \varepsilon \sum_{k=0}^{K-1} \Bigg[ \frac{q_i(\bar{x}^{(k)}_i)}{\lambda_i}  + \frac{1}{2} \left\|  \alpha_i^{(k)} \right\|^2_{\left(H_i^{(k)}\right)^{-1}}  \Bigg].\nonumber
	\end{align}
	
	We then compute the path integral formula for joint optimal control action $\bar{u}^*_i(\bar{x}_i, t)$. Substituting the Cole-Hopf transformation~\eqref{CH_Trans} into the joint optimal control action~\eqref{OptimalAction}, we have
	\begin{equation}\label{App2_Eq6}
	\bar{u}^*_{i}(\bar{x}_i, t) = \lambda_i \bar{R}_i^{-1} \bar{B}_i(\bar{x}_i)^\top \frac{\nabla_{\bar{x}_i} Z_i(\bar{x}_i,t)}{Z_i(\bar{x}_i,t)}.
	\end{equation}
	Hence, the path integral formula for joint optimal control can be obtained by substituting \eqref{App2_Eq5} into~\eqref{App2_Eq6}
	\begin{equation}\label{App2_Eq7}
	\bar{u}^*_{i} = -\lambda_i \bar{R}_i^{-1} \bar{B}_i(\bar{x}_i)^\top \cdot \lim_{\varepsilon \downarrow 0} \int \tilde{p}^*_i( \bar{\ell}_i | \bar{x}_i^{(0)}, t_0)  \nabla_{\bar{x}^{(0)}_{i(d)}}\tilde{S}_i^{\varepsilon, \lambda_i}\ d\bar{\ell}_i,
	\end{equation}
	where the optimal path distribution is given in~\eqref{OptPathDist}, and we still need to compute the gradient of the generalized path value. Expanding $\tilde{S}_i^{\varepsilon, \lambda_i}(\bar{x}_i^{(0)}, \bar{\ell}_i, t_0)$ inside the gradient, we have
	\begin{align}\label{App2_Eq8}
	&\nabla_{\bar{x}^{(0)}_{i(d)}}   \tilde{S}_i^{\varepsilon, \lambda_i} = \nabla_{\bar{x}^{(0)}_{i(d)}}\Bigg[  \frac{\phi_i(\bar{x}^{(K)}_i)}{\lambda_i} + \frac{\varepsilon}{\lambda_i}\sum_{k=0}^{K-1}q_i(\bar{x}^{(k)}_i, t_k) \nonumber\\
	&\hspace{55pt} + \frac{\varepsilon}{2} \sum_{k=0}^{K-1} \left\| \alpha_i^{(k)} \right\|_{(H_i^{(k)})^{-1}} + \frac{1}{2}\sum_{k=0}^{K-1} \log \left| H_i^{(k)} \right| \Bigg]. 
	\end{align}
	When the terminal cost is a constant, the gradient of the first term in~\eqref{App2_Eq8} is zero. The gradient of the second term in~\eqref{App2_Eq8} is
	\begin{equation}\label{App2_Eq9}
	\nabla_{\bar{x}^{(0)}_{i(d)}}\frac{\varepsilon}{\lambda_i}\sum_{k=0}^{K-1}q_i(\bar{x}^{(k)}_i, t_k)  = \frac{\varepsilon}{\lambda_i}\nabla_{\bar{x}^{(0)}_{i(d)}}q_i(\bar{x}^{(0)}_i, t_0).
	\end{equation}
	The third gradient in~\eqref{App2_Eq8} satisfies
	\begin{align}\label{App2_Eq10}
	&\nabla_{\bar{x}^{(0)}_{i(d)}} \frac{\varepsilon}{2} \sum_{k=0}^{K-1} \left\| \alpha_i^{(k)} \right\|_{(H_i^{(k)})^{-1}} =- (H_i^{(0)})^{-1} \alpha_i^{(0)}       \\ %
	& \hspace{75pt}  - \varepsilon \left(  H_i^{(0)}\right)^{-1} \cdot \left[ \nabla_{\bar{x}^{(0)}_{i(d)}} \bar{f}_{i(d)}(\bar{x}_i^{(0)}) \right] \alpha_i^{(0)} \nonumber\\
	& \hspace{75pt} + \frac{\varepsilon}{2} \left( \alpha_i^{(0)} \right)^\top \left[ \nabla_{\bar{x}^{(0)}_{i(d)}}  \left(  H_i^{(0)}   \right)^{-1}  \right]    \alpha_i^{(0)}.  \nonumber
	\end{align}
	The fourth gradient in~\eqref{App2_Eq8} is
	\begin{equation}\label{App2_Eq11}
	\nabla_{\bar{x}^{(0)}_{i(d)}}\frac{1}{2}\sum_{k=0}^{K-1} \log \left| H_i^{(k)} \right| = \frac{1}{2} \nabla_{\bar{x}^{(0)}_{i(d)}} \log \left|  H_i^{(0)}  \right|.
	\end{equation}
	Interested readers can refer to~\cite{Theodorou_JMLR_2010, Theodorou_2011} for more detailed deviation steps on~(\ref{App2_Eq9}-\ref{App2_Eq11}). Meanwhile, when computing the integral of~\eqref{App2_Eq10} in~\eqref{App2_Eq7}, we have
	\begin{align}\label{App2_Eq12}
	&\int \varepsilon \tilde{p}^*_i(\bar{\ell}_i | \bar{x}_i^{(0)}, t_0)   \left(H_i^{(0)}\right)^{-1} \left[ \nabla_{\bar{x}^{(0)}_{i(d)}} \bar{f}_{i(d)}(\bar{x}_i^{(0)}) \right] \alpha_i^{(0)} d\ell = 0, \allowdisplaybreaks \nonumber\\
	&\int \frac{\varepsilon}{2} \tilde{p}^*_i(\bar{\ell}_i | \bar{x}_i^{(0)}, t_0)   \left( \alpha_i^{(0)} \right)^\top \left[ \nabla_{\bar{x}^{(0)}_{i(d)}}  \left(  H_i^{(0)}   \right)^{-1}  \right]    \alpha_i^{(0)} d\ell \allowdisplaybreaks \nonumber \\
	&\hspace{140pt} = - \frac{1}{2} \nabla_{\bar{x}^{(0)}_{i(d)}} \log \left|  H_i^{(0)}  \right|.
	\end{align}
	Substituting (\ref{App2_Eq8})-(\ref{App2_Eq12}) into~\eqref{App2_Eq7}, we obtain the path integral formula for joint optimal control action in (\ref{OptCtrlPath}). This completes the proof. \hfill\QED

\end{document}